\newtheorem{theorem}{Theorem}
\newtheorem{lemma}[theorem]{Lemma}
\theoremstyle{definition}
\theoremstyle{remark}
\begin{document}

%
\title{
Optimizing Non-Orthogonal Multiple Access in Random Access Networks
}

\author{
   \IEEEauthorblockN{Ziru Chen\IEEEauthorrefmark{1}, Yong Liu\IEEEauthorrefmark{2}, Sami Khairy \IEEEauthorrefmark{1}, Lin X. Cai\IEEEauthorrefmark{1}, Yu Cheng\IEEEauthorrefmark{1}, and Ran Zhang\IEEEauthorrefmark{3}}
    \IEEEauthorblockA{\IEEEauthorrefmark{1}Department of Electrical and Computer Engineering, Illinois Institute of Technology, Chicago, USA }
    \IEEEauthorblockA{\IEEEauthorrefmark{2} Department of Physics and Telecommunication Engineering, South China Normal University, Guangzhou, China}   
    \IEEEauthorblockA{\IEEEauthorrefmark{3}College of Engineering and Computing, Miami University, Oxford, OH, USA}
\IEEEauthorblockA{zchen71@hawk.iit.edu, yliu@m.scnu.edu.cn, skhairy@hawk.iit.edu, \{lincai, cheng\}@iit.edu, zhangr43@miamioh.edu}
}

\maketitle

\begin{abstract}
Non-orthogonal multiple access (NOMA) has been considered as a promising solution for improving the spectrum efficiency of next-generation wireless networks. In this paper, the performance of a p-persistent slotted ALOHA system in support of NOMA transmissions is investigated. Specifically, wireless users can choose to use high or low power for data transmissions with certain probabilities. To achieve the maximum network throughput, an analytical framework is developed to analyze the successful transmission probability of NOMA and long term average throughput of users involved in the non-orthogonal transmissions. The feasible region of the maximum number of concurrent users using high and low power to ensure successful NOMA transmissions are quantified. Based on analysis, an algorithm is proposed to find the optimal transmission probabilities for users to choose high and low power to achieve the maximum system throughput. In addition, the impact of power settings on the network performance is further investigated. Simulations are conducted to validate the analysis.
\end{abstract}

\begin{IEEEkeywords}
non-orthogonal multiple access; p-persistent slotted ALOHA; analytical model, throughput maximization
\end{IEEEkeywords}

\section{Introduction}


 With the explosive growth of the connected devices,  machine type communications (MTCs) attract great attention of networking research community and become one of the most critical applications in 5G and beyond~\cite{dai2015non, bockelmann2016massive}. Non-orthogonal multiple access (NOMA) technology, which can effectively improve the spectral efficiency by exploiting successive interference cancellation (SIC) technique to enable non-orthogonal data transmissions, has been proved to be a good candidate in MTC to support the ultra-dense connectivity. 

 NOMA technique has been extensively studied in the literature~\cite{ding2017application,ding2017survey, maraqa2019survey,ni2019analysis}. In~\cite{ding2015impact,al2017optimum}, NOMA is applied in a centralized network, where a scheduler carefully pairs users and controls their transmission power based on their channel conditions to maximize the sum rate of NOMA users. However, for 5G MTC, it may be too costly for low power machines to measure and update their channel conditions with the scheduler, which makes it challenging to apply NOMA for MTC. Some recent works propose to apply NOMA in a slotted ALOHA system to support MTC of IoT devices~\cite{choi2017noma,seo2018nonorthogonal,choi2018game}, where users are randomly paired when they distributively access the channel. Simulation results in~\cite{elkourdi2018enabling} show that the performance of slotted ALOHA can be greatly improved when NOMA technique is applied. A game-theoretic approach is proposed in~\cite{choi2018game} to determine the transmission probability for a NOMA-based ALOHA system. The proposed solution ensures Nash equilibrium, but not necessarily the maximum throughput. Multi-channel slotted ALOHA has been studied in~\cite{seo2018nonorthogonal,choi2017noma}, where different users are distributed in different channels, and NOMA is applied when more than one user access the same channel at the same time. In~\cite{choi2017noma}, network throughput is analyzed, assuming that all users achieve the same data rate. \cite{seo2018nonorthogonal} extends the work in ~\cite{choi2017noma} by considering wireless fading channels. For NOMA transmissions in a random access network, users usually do not know other users' channel status and it is difficult to control the transmission power of users to ensure all users achieve the same data rate. In other words, in a random access network, user's data rate is determined by the received signal to noise and interference ratio (SINR), with SIC if applicable. To the best of our knowledge, none of the existing works analyze the performance of a slotted ALOHA system, considering variable rates of users due to the lack of efficient power control. In addition, no existing works study how to adapt protocol parameters to achieve the maximal network performance.

In this paper, we study the performance of a $p$-persistent slotted ALOHA system where NOMA transmissions are supported. Specifically, in each slot, a user decides to transmit data with probability $p$, or do not transmit with probability $1-p$. For users who transmit, they further select a power level, high or low, with certain probabilities. NOMA technique is applied to decode the data when multiple users transmit in the same slot. NOMA transmissions can be successfully decoded only when the involved users do not severely interfere with each other, i.e., SINR, with SIC if applicable, is larger than a threshold. Our objective is to maximize the network throughput of the ALOHA system, by optimizing the transmission probability of each user using different power levels. To this end, we first develop a mathematical model to analyze the performance of the system in terms of the success probability of NOMA transmissions, and the average network throughput. Second, we derive the sufficient and necessary conditions to ensure successful NOMA transmissions. The feasible region of the maximum number of supported high power users and low power users is quantified. Third, based on the analysis, the optimal probability of selecting high/low powers for data transmissions can be obtained to maximize the network throughput. Last but not the least, the impacts of the power levels and SINR threshold on the system performance is further investigated. 



The remainder of the paper is organized as follows. The system model is presented in Section~\ref{sec:systemmodle}. An analytical model is developed in Section~\ref{pack_th}, followed by the numerical results in Section~\ref{numerical results}. Concluding remarks are provided in Section~\ref{conclusion}.

\section{System Model}\label{sec:systemmodle}

We consider a $p$-persistent slotted ALOHA system of $m$ users and an AP. Users are randomly deployed in the coverage area of the AP, i.e., a circle with radius $R$ meters. Define $r_n$ as the communication distance between the $n$-th user and the AP. Correspondingly, the channel gain between the  $n$-th user and the AP is $g_n=L_0 r_n^{-\alpha}|h_n|^2$, where $\alpha, L_0$ represent the path loss exponent and the reference path loss respectively, and $|h_n|^2$ is the small scale fading. 

Users adopt $p$-persistent slotted ALOHA protocol to transmit data to the AP. In each time slot, a user transmits with probability $p$, and do not transmit with probability $1-p$. To facilitate NOMA transmissions, each transmitting user can further choose to adapt its transmission power to ensure the received signal strength at the AP is $v_1$ or $v_2$ ($v_1 > v_2$), with different probabilities. Therefore, in each time slot, a user chooses to adapt its transmission power to $v_1$ with probability $\tau_1$, and chooses to adapt to $v_2$ with probability $\tau_2$, and $\tau_1+\tau_2=p.$ In a random access network, a user can measure the statistics of the channel between itself and the AP, but does not have the channel information of other users. Thus, the $n$-th user can choose its transmission power to ensure the received signal strength at the AP,
\begin{align}
P_{n}\!\!=\!\!
\begin{cases}
v_{1}/g_n,& \mbox{high power mode with prob. $\tau_1$},\\
v_{2}/g_n,& \mbox{low power mode with prob. $\tau_2$},\\
0, & \mbox{no Tx with prob. $1\!-\!\tau_1\!-\!\tau_2$}, 
\end{cases}
\end{align}
where the transmission power is inversely proportional to the channel gain.

Due to random access, it is possible that multiple users may transmit in the same time slot, choosing either $v_1$ or $v_2$. In such case, the AP applies NOMA technique to decode the data. 
Denote the subsets of users selecting $v_1$ and $v_2$ for data transmissions by $\mathcal{N}_h$ and $\mathcal{N}_l$, respectively. The received signal at the AP is written as
\begin{align}
    y=\sum_{i\in\mathcal{N}_h}\sqrt{v_1}x_i+\sum_{j\in\mathcal{N}_l}\sqrt{v_2}x_j+z,
\end{align}
where $x_i$, $x_j$ are the transmitted signals of the user $i$ and user $j$, for $i\in\mathcal{N}_h$ and $j\in\mathcal{N}_l,$ and $z$ is the background noise. 

The AP first decodes the signal with the highest power under the interference from all other users involved in the NOMA transmissions. 
Given that the highest signal is from user set $\mathcal{N}_h$, the SINR of the first decoding user is written as
\begin{align}\label{sinr_req_1}
    \mathbf{SINR}_1=\frac{v_1}{v_1(|\mathcal{N}_h|-1)+v_2(|\mathcal{N}_l|)+1},
\end{align}
where $|\cdot|$ represents the number of users in the corresponding user set. 
If the first $i-1$ users in $\mathcal{N}_h$ are successfully decoded, the received SINR of the $i$-th user in the high power mode is 
\begin{align}\label{sinr_req_h}
    \mathbf{SINR}_i=\frac{v_1}{v_1(|\mathcal{N}_h|-i)+v_2(|\mathcal{N}_l|)+1},
\end{align}
where $i\leq |\mathcal{N}_h|$.
Notice that $\mathbf{SINR}_i$ in (\ref{sinr_req_h}) increases with $i$. This implies that if the first signal is successfully decoded, all the remaining signals of users in high power mode will also be decoded.  

After all signals of users in high power mode are canceled, the AP continues to decode signals of users in low power mode if the user set $\mathcal{N}_l$ is not empty. Similarly, the SINR of the $j$-th users in the low power mode is
\begin{align}\label{sinr_req_l}
    \mathbf{SINR}_j=\frac{v_2}{v_2(|\mathcal{N}_l|-j)+1}, 
\end{align}%
where $j\leq |\mathcal{N}_l|$.
Similarly, $\mathbf{SINR}_j$ in (\ref{sinr_req_l}) increases with $j$, implying that if the first signal in $\mathcal{N}_l$ can be  successfully decoded, all the following signals of users in low power mode will also be decoded. The AP continues to apply SIC to decode signals until one signal cannot be successfully decoded or all signals are successfully decoded.

\section{Performance Analysis}\label{pack_th}

In this section, we develop an analytical model to analyze the performance of $p$-persistent slotted ALOHA where NOMA transmission is supported. We quantify the feasible region of $|\mathcal{N}_h|, |\mathcal{N}_l|$ that ensures successful NOMA transmissions. Based on the analysis, the successful transmission probability and the long-term average system throughput are derived as a function of the power levels $v_1,v_2$, the number of contending users $m$, and the SINR threshold $\Gamma$. We further formulate a throughput maximization problem and propose an algorithm to solve the formulated combinatorial optimization problem. 

\subsection{Feasible region of $|\mathcal{N}_h|$ and $|\mathcal{N}_l|$}
To guarantee successful NOMA transmissions, the received SINR at the AP should be larger than a threshold $\Gamma$. When a number of users concurrently transmit, it is possible that the SINR of some users may not meet the requirement. Thus, we first investigate the feasible region, i.e., the maximum number of users in  $\mathcal{N}_h$ and $\mathcal{N}_l$, wherein the received signals of some or all users can be decoded.


\begin{lemma}\label{ach_reg}
To ensure successful NOMA transmissions, $|\mathcal{N}_h|, |\mathcal{N}_l|$ should satisfy the following conditions:
\begin{itemize}
    \item[a)] The signals of users in high power mode can be decoded when $|\mathcal{N}_h|\leq\min\{m,(\lfloor\frac{v_1-\Gamma}{ v_1\Gamma}\rfloor+1)^\dag\}$, and $|\mathcal{N}_l|\leq \min\{m-{|\mathcal{N}_h|},(\lfloor \frac{v_1-\Gamma (|\mathcal{N}_h|-1) v_1-\Gamma}{v_2\Gamma}\rfloor)^\dag\}$, where $\lfloor x \rfloor$ equals to the biggest integer smaller than or equal to $x$, and $(x)^\dag=max\{0,x\}$. 
    \item[b)] The signals of users in low power mode can be decoded when $|\mathcal{N}_h|\leq \min \{m-|\mathcal{N}_l|,(\lfloor \frac{v_1-|\mathcal{N}_l|v_2\Gamma-\Gamma }{v_1 \Gamma}+1\rfloor)^\dag\}$, and $ |\mathcal{N}_l|\leq \min\{m,(\lfloor\frac{v_2-\Gamma}{v_2\Gamma}\rfloor+1)^\dag\}$.
\end{itemize}

\end{lemma}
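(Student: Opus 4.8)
The plan is to exploit the monotonicity of the decoding SINRs already established in the excerpt: within each power group only the first (worst-case) user imposes a binding constraint, so the whole decodability question reduces to two scalar inequalities that I then solve for the group sizes and round.

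First I would handle part a), the decodability of the high-power group. Because (\ref{sinr_req_h}) shows $\mathbf{SINR}_i$ is increasing in $i$, the entire high-power set is decodable if and only if the first high-power user clears the threshold, i.e. $\mathbf{SINR}_1\geq\Gamma$, which by (\ref{sinr_req_1}) reads
\[
\frac{v_1}{v_1(|\mathcal{N}_h|-1)+v_2|\mathcal{N}_l|+1}\geq\Gamma .
\]
This single inequality is rearranged in two directions. Solving for $|\mathcal{N}_l|$ gives $|\mathcal{N}_l|\leq (v_1-\Gamma(|\mathcal{N}_h|-1)v_1-\Gamma)/(v_2\Gamma)$; since $|\mathcal{N}_l|$ is a nonnegative integer I apply $\lfloor\cdot\rfloor$ and clamp below by $0$ with $(\cdot)^\dag$. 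Setting $|\mathcal{N}_l|=0$ (least interference) maximizes the admissible $|\mathcal{N}_h|$, yielding $|\mathcal{N}_h|\leq (v_1-\Gamma)/(v_1\Gamma)+1$; flooring (using $\lfloor x\rfloor+1=\lfloor x+1\rfloor$) and intersecting with the population bound $|\mathcal{N}_h|\leq m$ recovers the stated condition. The bound $|\mathcal{N}_l|\leq m-|\mathcal{N}_h|$ merely records that $\mathcal{N}_h$ and $\mathcal{N}_l$ are disjoint subsets of the $m$ users.

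Next, for part b) I would argue that low-power decodability is a \emph{conjunction} of two requirements applied in sequence: the high-power group must first be decoded and cancelled, and then the first low-power user must clear the threshold. The first requirement is again $\mathbf{SINR}_1\geq\Gamma$ from (\ref{sinr_req_1}); rearranging that same inequality for $|\mathcal{N}_h|$ (instead of $|\mathcal{N}_l|$) gives $|\mathcal{N}_h|\leq (v_1-|\mathcal{N}_l|v_2\Gamma-\Gamma)/(v_1\Gamma)+1$, which after flooring and clamping is exactly the $|\mathcal{N}_h|$ bound in b). The second requirement uses (\ref{sinr_req_l}) with $j=1$ after cancellation, namely
\[
\frac{v_2}{v_2(|\mathcal{N}_l|-1)+1}\geq\Gamma ,
\]
and solving for $|\mathcal{N}_l|$ and flooring gives $|\mathcal{N}_l|\leq (v_2-\Gamma)/(v_2\Gamma)+1$, intersected with $|\mathcal{N}_h|+|\mathcal{N}_l|\leq m$.

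The step needing the most care is not the algebra but the logical structure of part b): I must justify that, because SIC removes the high-power signals before any low-power user is decoded, the low-power SINR in (\ref{sinr_req_l}) carries no $v_1$ term, so the low-power threshold condition is \emph{independent} of $|\mathcal{N}_h|$, whereas the cancellation condition on the high-power group couples the two counts. The remaining subtleties are pure bookkeeping: applying $\lfloor\cdot\rfloor$ is legitimate precisely because each $\mathbf{SINR}$ is monotone in the integer group sizes, $(\cdot)^\dag$ rules out negative (infeasible) bounds, and the disjointness constraints $|\mathcal{N}_h|+|\mathcal{N}_l|\leq m$ must be appended. Verifying that the two reorderings of the single inequality (\ref{sinr_req_1}) are equivalent, via $\lfloor x\rfloor+n=\lfloor x+n\rfloor$ for integer $n$, then closes the argument.
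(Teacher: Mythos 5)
Your proposal is correct and follows essentially the same route as the paper's proof: reduce each group's decodability to its first (worst-case) user via the monotonicity of the SINRs in (\ref{sinr_req_h}) and (\ref{sinr_req_l}), rearrange the resulting scalar inequalities for each group size, and apply flooring, the $(\cdot)^\dag$ clamp, and the population constraint $|\mathcal{N}_h|+|\mathcal{N}_l|\leq m$. The only differences are presentational --- you make explicit the conjunction structure of part b) and the choice $|\mathcal{N}_l|=0$ behind the bound on $|\mathcal{N}_h|$ in part a), both of which the paper leaves implicit.
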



\begin{proof}
a) The signal of users in high power mode can be decoded if the signal of the user with the highest signal strength satisfy the SINR condition, i.e., 
\begin{align}\label{v_1_condition}
    SINR_1 = \frac{v_1}{(|\mathcal{N}_h|-1)v_1+|\mathcal{N}_l|v_2+1}\geq \Gamma.
\end{align}%
(\ref{v_1_condition}) is re-written as 
\begin{align} 
    |\mathcal{N}_h|\leq (\lfloor\frac{v_1 - v_2|\mathcal{N}_l|-\Gamma}{ v_1\Gamma}\rfloor+1)^\dag\leq (\lfloor\frac{v_1-\Gamma}{ v_1\Gamma}\rfloor+1)^\dag \nonumber
\end{align}
\begin{align}
|\mathcal{N}_l|\leq (\lfloor \frac{v_1-\Gamma (|\mathcal{N}_h|-1) \nonumber v_1-\Gamma}{v_2\Gamma}\rfloor)^\dag.\nonumber
\end{align}
Given that $|\mathcal{N}_h|+|\mathcal{N}_l|\leq m$, we have $$ |\mathcal{N}_h|\leq\min\{m,(\lfloor\frac{v_1-\Gamma}{ v_1\Gamma}\rfloor+1)^\dag\},$$ and $$|\mathcal{N}_l|\leq \min\{m-{|\mathcal{N}_h|},(\lfloor \frac{v_1-\Gamma (|\mathcal{N}_h|-1) v_1-\Gamma}{v_2\Gamma}\rfloor)^\dag\}.$$ 
Notice that if the first signal can be successfully decoded, all the remaining high power users' signals can be successively decoded by SIC technique, as explain in~\ref{sinr_req_h}. Equivalently, \emph{Lemma}~\ref{ach_reg} a) also ensures that all high power users transmit successfully. 

b) The signals of users in low power mode can be decoded when all signals of users in high power mode are successfully decoded and the signal of the users in the low power node also satisfy the SINR requirement. That is, 
\begin{align}\label{sinr_H}
    \frac{v_1}{(|\mathcal{N}_h|-1)v_1+|\mathcal{N}_l|v_2+1} \geq \Gamma,
\end{align}
\begin{align}\label{sinr_L}
    \frac{v_2}{(|\mathcal{N}_l|-1)v_2+1}\geq \Gamma.
\end{align}%
\eqref{sinr_H} is re-written as 
\begin{align}
|\mathcal{N}_h| \leq (\lfloor\frac{v_1-|\mathcal{N}_l|v_2\Gamma-\Gamma }{v_1 \Gamma}+1\rfloor)^\dag. \nonumber
\end{align}
\eqref{sinr_L} is re-written as
\begin{align}
    |\mathcal{N}_l| \leq (\lfloor\frac{v_2-\Gamma}{ v_2\Gamma}+1\rfloor)^\dag. \nonumber
\end{align} 
Given that $|\mathcal{N}_h|+|\mathcal{N}_l|\leq m,$ we have $$|\mathcal{N}_h|\leq \min \{m-|\mathcal{N}_l|,(\lfloor \frac{v_1-|\mathcal{N}_l|v_2\Gamma-\Gamma }{v_1 \Gamma}+1\rfloor)^\dag\},$$ and $$0\leq |\mathcal{N}_l|\leq \min\{m,(\lfloor\frac{v_2-\Gamma}{v_2\Gamma}\rfloor+1)^\dag\}.$$
Similarly, \emph{Lemma}~\ref{ach_reg} b) also ensures that all low power users transmit successfully. 
\end{proof}

To simplify the notation, we define $n_h=|\mathcal{N}_h|$ and $n_l=|\mathcal{N}_l|.$ To ensure NOMA transmission is successful, at least one user should transmit successfully, either a high power user or low power user. \emph{Lemma}~\ref{ach_reg} a) presents the upper bound of $n_1$ and $n_2$ to ensure that at least one high power user transmits successfully. Define the upper bound of $n_h$ and $n_l$ in \emph{Lemma}~\ref{ach_reg} a) as 
$$n_h^{(a)}=\min\{m,(\lfloor\frac{v_1-\Gamma}{ v_1\Gamma}\rfloor+1)^\dag\},$$ $$n_l^{(a)}=\min\{m-{n_h},(\lfloor \frac{v_1-\Gamma (n_h-1)v_1-\Gamma}{v_2\Gamma}\rfloor)^\dag\}.$$ 
Similarly, Lemma~\ref{ach_reg} b) presents the upper bound of $n_h$ and $n_l$ to ensure that at least one low power user transmits successfully, which is denoted as  
$$n_h^{(b)}=\min\{m-n_l,(\lfloor\frac{v_1-n_lv_2-\Gamma}{v_1\Gamma}\rfloor)^\dag+1\},$$
$$n_l^{(b)}=\min\{m,(\lfloor\frac{v_2-\Gamma}{ v_2\Gamma}\rfloor+1)^\dag\}.$$

\subsection{Successful Transmission Probability}

Given that there are $m$ users using $p$-persistent slotted ALOHA to contend for data transmissions, the probability that $n_1$ users are selecting high power and $n_2$ users are selecting low power to transmit is 
\begin{align}
    &\mathbb{P}(n_1, n_2|m,\tau_1,\tau_2) \nonumber\\
    =& \binom m {n_1+n_2} \binom {n_1+n_2} {n_1} \tau_1^{n_1} \tau_2^{n_2} (1-\tau_1-\tau_2)^{(m-n_1-n_2)}.
\end{align}

To ensure successful NOMA transmissions,  $n_1$ and $n_2$ should satisfy Lemma \ref{ach_reg}. 
\begin{lemma}\label{lm_p_success}
Given a user selects $v_1$ with probability $\tau_1$, $v_2$ with probability $\tau_2$, the probability that its transmission is successful is given by 
\begin{align}\label{p_success}
   \mathbb{P}_{success} &(m,\tau_1,\tau_2) =  \tau_1\sum_{n_1=1}^{n_h^{(a)}}\sum_{n_2=0}^{n_l^{(a)}}\mathbb{P}(n_1-1,n_2|m-1,\tau_1,\tau_2) \nonumber \\
    + & \tau_2\sum_{n_2=1}^{n_l^{(b)}}\sum_{n_1=0}^{n_h^{(b)}}\mathbb{P}(n_1,n_2-1|m-1,\tau_1,\tau_2).
\end{align}%
\end{lemma}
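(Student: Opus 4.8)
The plan is to compute the success probability of a tagged user by conditioning on which power level that user selects, and then summing over all configurations of the \emph{other} $m-1$ users that still permit the tagged user's signal to be decoded. The key observation is that we should analyze the event from the perspective of a single marked user and use the feasibility conditions of \emph{Lemma}~\ref{ach_reg} to delimit the admissible number of concurrent transmitters.

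\medskip

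First I would decompose the success event by the law of total probability according to the tagged user's action. With probability $\tau_1$ the user transmits at high power, and with probability $\tau_2$ at low power; these two cases are mutually exclusive and contribute additively, which explains the two-term structure of \eqref{p_success}. In the high-power case, the tagged user is one of the $n_1$ high-power users, so among the remaining $m-1$ users there are $n_1-1$ high-power and $n_2$ low-power transmitters; this is exactly why the first sum uses $\mathbb{P}(n_1-1,n_2\mid m-1,\tau_1,\tau_2)$. By \emph{Lemma}~\ref{ach_reg}~a), the tagged high-power user's signal is successfully decoded precisely when $(n_1,n_2)$ lies in the feasible region bounded by $n_h^{(a)}$ and $n_l^{(a)}$; since we require the tagged user to actually be transmitting, the index $n_1$ ranges from $1$ to $n_h^{(a)}$ while $n_2$ ranges from $0$ to $n_l^{(a)}$. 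The symmetric argument applies to the low-power case using \emph{Lemma}~\ref{ach_reg}~b), giving the second double sum with bounds $n_h^{(b)}$ and $n_l^{(b)}$.

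\medskip

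Next I would justify that conditioning on the tagged user's power choice and then distributing the remaining $m-1$ users is consistent with the multinomial form of $\mathbb{P}(n_1,n_2\mid m,\tau_1,\tau_2)$. Because each of the $m$ users independently selects high power, low power, or silence with probabilities $\tau_1,\tau_2,1-\tau_1-\tau_2$, removing one user whose action is fixed leaves the remaining $m-1$ users governed by the same independent trichotomous choice; hence the conditional distribution of the other users is again multinomial with parameter $m-1$. Multiplying the fixed user's probability ($\tau_1$ or $\tau_2$) by the probability of an admissible configuration of the rest yields each summand, and summing over the feasible region gives the stated result.

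\medskip

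The main obstacle I anticipate is being careful about the boundary conditions and the coupling between the two bounds. The feasibility region from \emph{Lemma}~\ref{ach_reg} is not a simple rectangle: the admissible range of $n_l^{(a)}$ depends on $n_h$, and likewise $n_h^{(b)}$ depends on $n_l$, so the double sums are genuinely nested rather than a product of independent ranges. I would verify that the stated summation limits correctly encode this dependency and that the lower index shifts ($n_1-1$ in one term, $n_2-1$ in the other) are tracked consistently between the ``tagged-user'' indexing and the ``remaining-users'' indexing. A secondary subtlety is confirming that the two cases in \eqref{p_success} do not double-count any event: since the tagged user occupies either the high-power or the low-power set but never both, the events are disjoint and the additive combination is valid.
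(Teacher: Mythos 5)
Your proposal is correct and follows essentially the same route as the paper's proof: condition on the tagged user's power selection via the law of total probability, note that the remaining $m-1$ users are distributed according to $\mathbb{P}(\cdot,\cdot\mid m-1,\tau_1,\tau_2)$, and use the feasible regions of \emph{Lemma}~\ref{ach_reg}~a) and~b) to delimit the two double sums. Your additional remarks on the nested (non-rectangular) summation ranges and the disjointness of the two cases are sound and slightly more careful than the paper's own exposition, but they do not change the argument.
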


\begin{proof}

If a typical user selects $v_1$,  all high power user's transmissions are successful if \emph{Lemma}~\ref{ach_reg} a) is satisfied, i.e., $n_1\leq n_h^{(a)}$ and $n_2\leq n_l^{(a)}$. Thus, given that the typical user selects $v_1$, the successful transmission probability is
\begin{align}
   \sum_{n_1=1}^{n_h^{(a)}}\sum_{n_2=0}^{n_l^{(a)}}\mathbb{P}(n_1-1,n_2|m-1,\tau_1,\tau_2).
\end{align}%

Similarly,  conditioning on that the typical user selects $v_2$, the conditional successful transmission probability is
\begin{align}
   \sum_{n_2=1}^{n_l^{(b)}}\sum_{n_1=0}^{n_h^{(b)}}\mathbb{P}(n_1,n_2-1|m-1,\tau_1,\tau_2).
\end{align}%
Given the probability that the typical user select $v_1$ and $v_2$ is $\tau_1$ and $\tau_2$ respectively, according to total probability theorem, we can derive the successful transmission probability in~\eqref{p_success}.

\end{proof}

\subsection{Long-term Average Throughput}
Given that $n_1$ users in high power mode and $n_2$ users in low power mode are transmit currently, the long-term average sum rate of all users in the high power mode is
\begin{align}\label{throughput_h}
    \mathbb{E}(R_1|n_1,n_2) = \sum_{i=1}^{n_1}\log_2\left(1+\frac{v_1}{(i-1) v_1+n_2 v_2+1}\right),
\end{align}
where $n_1\leq n_h^{(a)} $ and $n_2 \leq n_l^{(a)}$ according to~\emph{Lemma}~\ref{ach_reg} a).
Similarly, the long-term average sum data rate of all users in low power mode is given by
\begin{align}\label{throughput_l}
    \mathbb{E}(R_2|n_1,n_2) = \sum_{j=1}^{n_2} \log_2\left(1+\frac{v_2}{(j-1)v_2+1}\right),
\end{align}
where $n_1\leq n_h^{(b)} $ and $n_2 \leq n_l^{(b)}$ according to~\emph{Lemma}~\ref{ach_reg} b).

\begin{lemma}\label{throughput_avg}
The long-term average throughput of the system is given by
\begin{align}\label{throughput}
    Th_{avg}&(m,\tau_1,\tau_2) = \sum_{n_1=1}^{n_h^{(a)}}\sum_{n_2=0}^{n_l^{(a)}}\mathbb{E}(R_1|n_1,n_2)\mathbb{P}(n_1,n_2|m,\tau_1,\tau_2) \nonumber\\ 
        + &\sum_{n_2=1}^{n_l^{(b)}}\sum_{n_1=0}^{n_h^{(b)}}\mathbb{E}(R_2|n_1,n_2)\mathbb{P}(n_1,n_2|m,\tau_1,\tau_2),
\end{align}
\end{lemma}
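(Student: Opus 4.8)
The plan is to derive $Th_{avg}$ as the expectation of the per-slot throughput, split into the aggregate rate delivered to the high-power users and that delivered to the low-power users, via the law of total expectation over the random slot outcome.

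First I would let $(N_1,N_2)$ denote the random numbers of high- and low-power transmitters in a generic slot. Since each of the $m$ users independently transmits at high power with probability $\tau_1$, at low power with probability $\tau_2$, and is silent otherwise, the joint law of $(N_1,N_2)$ is exactly the multinomial probability $\mathbb{P}(n_1,n_2\mid m,\tau_1,\tau_2)$ already introduced. I would then express the throughput collected in that slot as $R_1\,\mathbf{1}_a+R_2\,\mathbf{1}_b$, where $R_1,R_2$ are the high- and low-power sum rates and $\mathbf{1}_a,\mathbf{1}_b$ are the indicators of the two decodability events. By \emph{Lemma}~\ref{ach_reg} a), the high-power signals are decoded precisely on $\{N_1\le n_h^{(a)},\,N_2\le n_l^{(a)}\}$, so $R_1$ is collected there; by \emph{Lemma}~\ref{ach_reg} b), the low-power signals are decoded precisely on $\{N_1\le n_h^{(b)},\,N_2\le n_l^{(b)}\}$, so $R_2$ is collected there. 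The two terms simply add, as they concern disjoint user sets.

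Taking expectations and conditioning on $(N_1,N_2)=(n_1,n_2)$, linearity of expectation gives $Th_{avg}=\mathbb{E}[R_1\mathbf{1}_a]+\mathbb{E}[R_2\mathbf{1}_b]$, and each indicator reduces to restricting the double sum to the corresponding feasible region. Because the received strengths are pinned to $v_1$ and $v_2$, the conditional sum rates are the deterministic quantities $\mathbb{E}(R_1\mid n_1,n_2)$ and $\mathbb{E}(R_2\mid n_1,n_2)$ recorded in~\eqref{throughput_h} and~\eqref{throughput_l}. Weighting these by $\mathbb{P}(n_1,n_2\mid m,\tau_1,\tau_2)$ and summing over $1\le n_1\le n_h^{(a)}$, $0\le n_2\le n_l^{(a)}$ for the high-power part and over $1\le n_2\le n_l^{(b)}$, $0\le n_1\le n_h^{(b)}$ for the low-power part reproduces~\eqref{throughput}.

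The step needing the most care is the passage from the indicator events to the stated nested sums. The upper limit $n_l^{(a)}$ depends on $n_1$ and $n_h^{(b)}$ depends on $n_2$ (through the bounds in \emph{Lemma}~\ref{ach_reg}), so I must check that ordering each double sum with the fixed-limit variable outermost and these index-dependent limits innermost enumerates the feasible event exactly once, neither clipping nor over-counting the lattice points $(n_1,n_2)$. Once that bookkeeping is confirmed, the remainder is immediate from linearity of expectation.
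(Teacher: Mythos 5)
Your proposal is correct and follows essentially the same route the paper intends: the paper gives no explicit proof of this lemma, only remarking that the logic parallels \emph{Lemma}~\ref{lm_p_success}, i.e., a law-of-total-expectation argument over the multinomial outcome $(n_1,n_2)$ with the sums restricted to the feasible regions of \emph{Lemma}~\ref{ach_reg}. Your additional care about the index-dependent inner limits $n_l^{(a)}(n_1)$ and $n_h^{(b)}(n_2)$ is a sound and worthwhile check that the paper leaves implicit.
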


The logic of the proof of \emph{Lemma}~\ref{throughput_avg} is similar to that of the \emph{Lemma}~\ref{lm_p_success}.
 



\subsection{Throughput Maximization Problem}
Our objective is to maximize the achieved long-term average throughput of the ALOHA system by tuning the transmission probability $\tau_1$, $\tau_2$. 
\begin{equation} \label{eq:average_th}
\begin{aligned}
& \underset{\tau_1,\tau_2}{\text{maximize}}
& &Th_{avg}(m,\tau_1,\tau_2) \\
& \text{subject to}
& & 0 < \tau_1,\tau_2 < 1 \\
\end{aligned}
\end{equation}


The formulated optimization problem in~\eqref{eq:average_th} is a combinatorial optimization problem,  which is known to be NP-hard. Thus, we propose an iterative algorithm to obtain the solution in Algorithm~\ref{alg}. We first start from $\tau_1=0$, then we update $\tau_2$ which maximizes the throughput in~\eqref{throughput}; with the updated $\tau_2$, we update $\tau_1$ that maximizes~\eqref{throughput}. We iteratively update $\tau_1$ and $\tau_2$ until the throughput improvement becomes negligible. 

\begin{algorithm}\label{alg}
\SetAlgoLined
\KwResult{$\tau_1$, $\tau_2$, $Th_{avg}$ in \eqref{eq:average_th}}
 Input: $\Gamma$, $m$, $v_1$, $v_2$,\;
 Initialize: $\tau_1=0$, $\epsilon = 10^{-5}$, $Th_{avg}=0$\;
 \While{True}{
  
  For given $\tau_1,$ find $\tau_2$ that maximizes  $Th_{avg}^\prime$=\eqref{throughput}\;   
  \eIf{$Th_{avg}^\prime -Th_{avg} > \epsilon$}{
   Update $\tau_2$;
   $Th_{avg} = Th_{avg}^\prime$\;
   }{
   Break\;
  }
  For updated $\tau_2,$  find $\tau_1$ that maximizes  $Th_{avg}^\prime$=\eqref{throughput}; 
   \eIf{$Th_{avg}^\prime -Th_{avg} > \epsilon$}{
   Update $\tau_1$;
   $Th_{avg} = Th_{avg}^\prime$\;
   }{
   Break\;
  }
   }
 \caption{Algorithm to find the solution for \eqref{eq:average_th}}
\end{algorithm}


\section{Numerical results}\label{numerical results}
In this section, simulations are conducted with Matlab to validate the analysis. We setup a network and implement $p$-persistent slotted ALOHA,  as described in Sec.~\ref{sec:systemmodle}. If not otherwise specified, we set $v_1 = 4$, $v_2 = 1.5$, $m=10$, $\Gamma = 1.5$. 


\begin{figure*}[!htb]
  \centering
  \hfill
 \begin{minipage}{0.27\textwidth}
    \centering
\includegraphics[width=\textwidth]{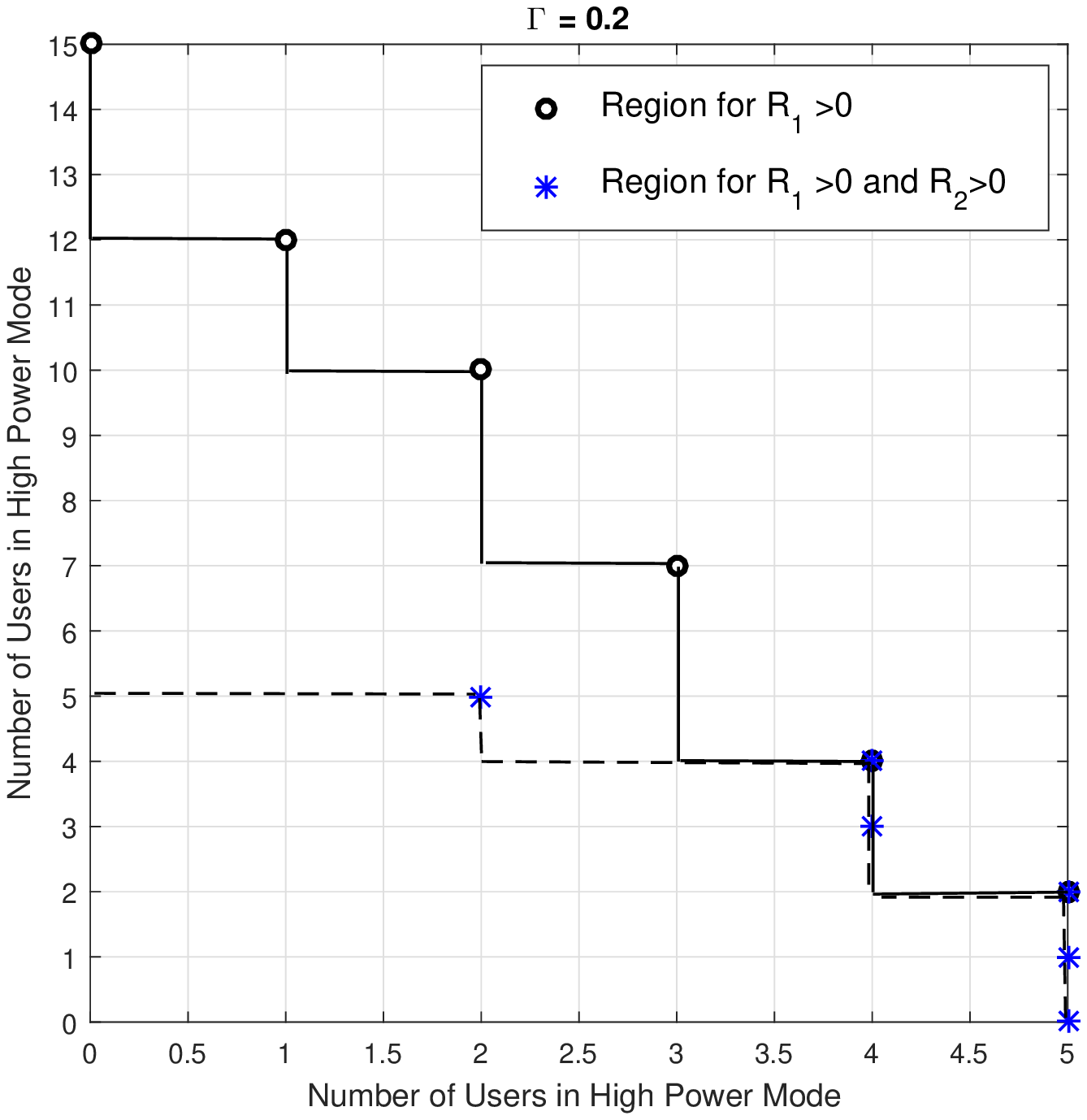}
 \caption{Feasible Regions in Lemma 1}
 \label{Feasible_region}
  \end{minipage}
  \hfill
 \begin{minipage}{0.27\textwidth}
  	\centering
    \includegraphics[width=\linewidth]{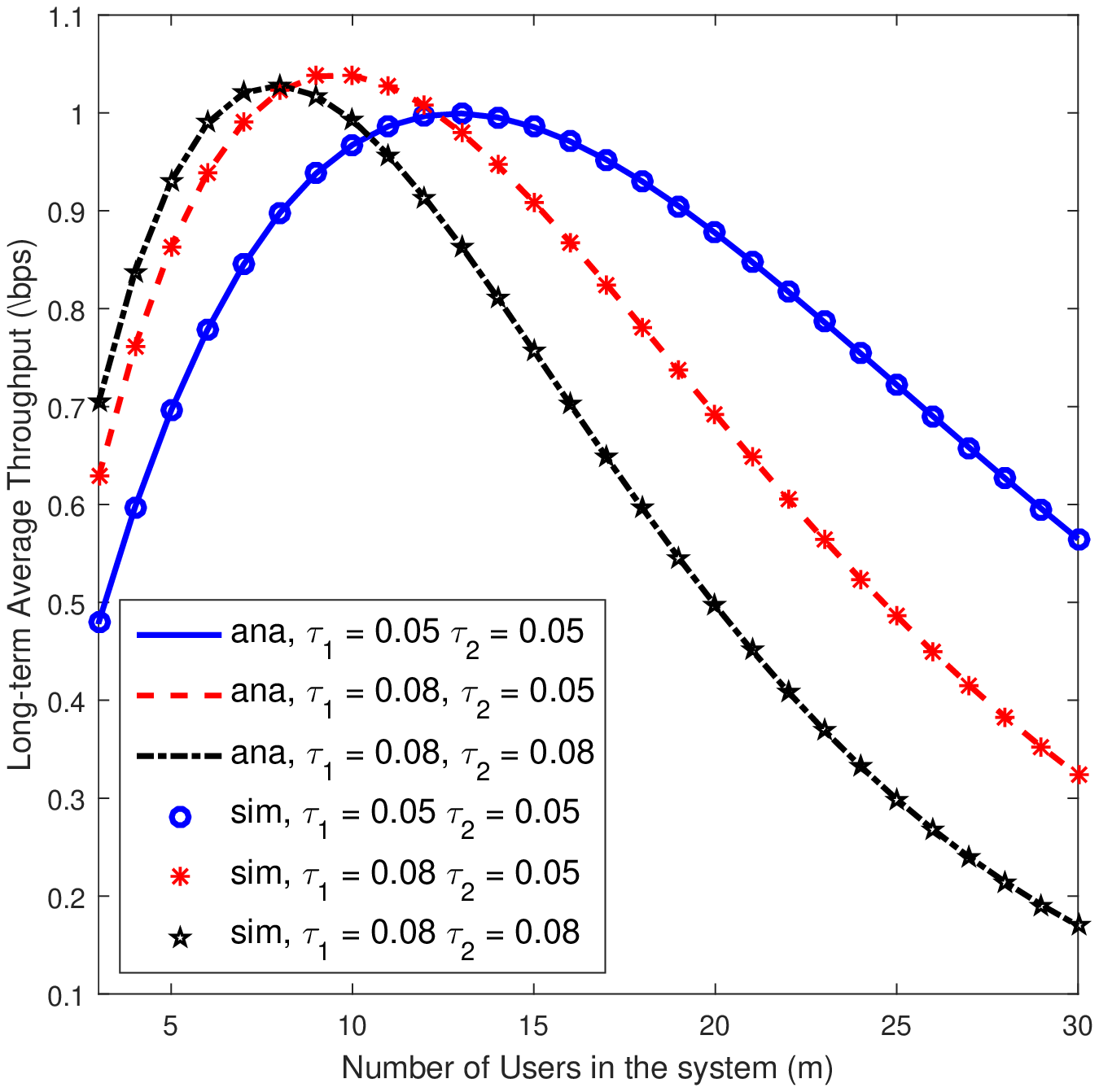}
    \caption{Long-term Average Throughput under different number of users  $m$.}
\label{relationship_m_th}
  \end{minipage}
  \hfill
\begin{minipage}{0.27\textwidth}
    \centering
    \includegraphics[width=\textwidth]{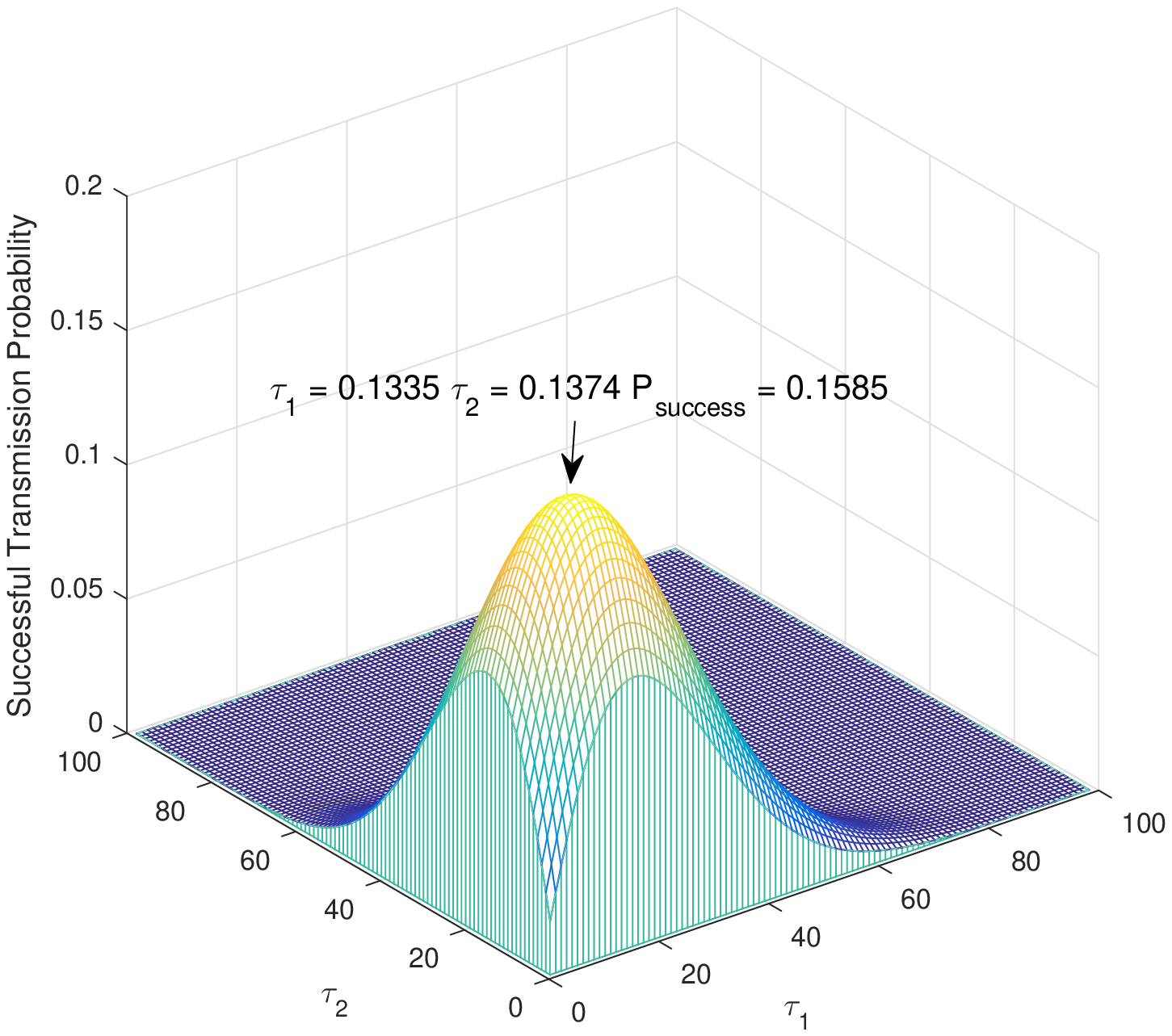}
\caption{Successfully Transmission probability under different $\tau_1$ and $\tau_2$.}
\label{p_s_t_2}
  \end{minipage}
  \hfill
\end{figure*}

\begin{figure*}[!htb]
  \centering
  \hfill
   \begin{minipage}{0.27\textwidth}
  	\centering
\includegraphics[width=\textwidth]{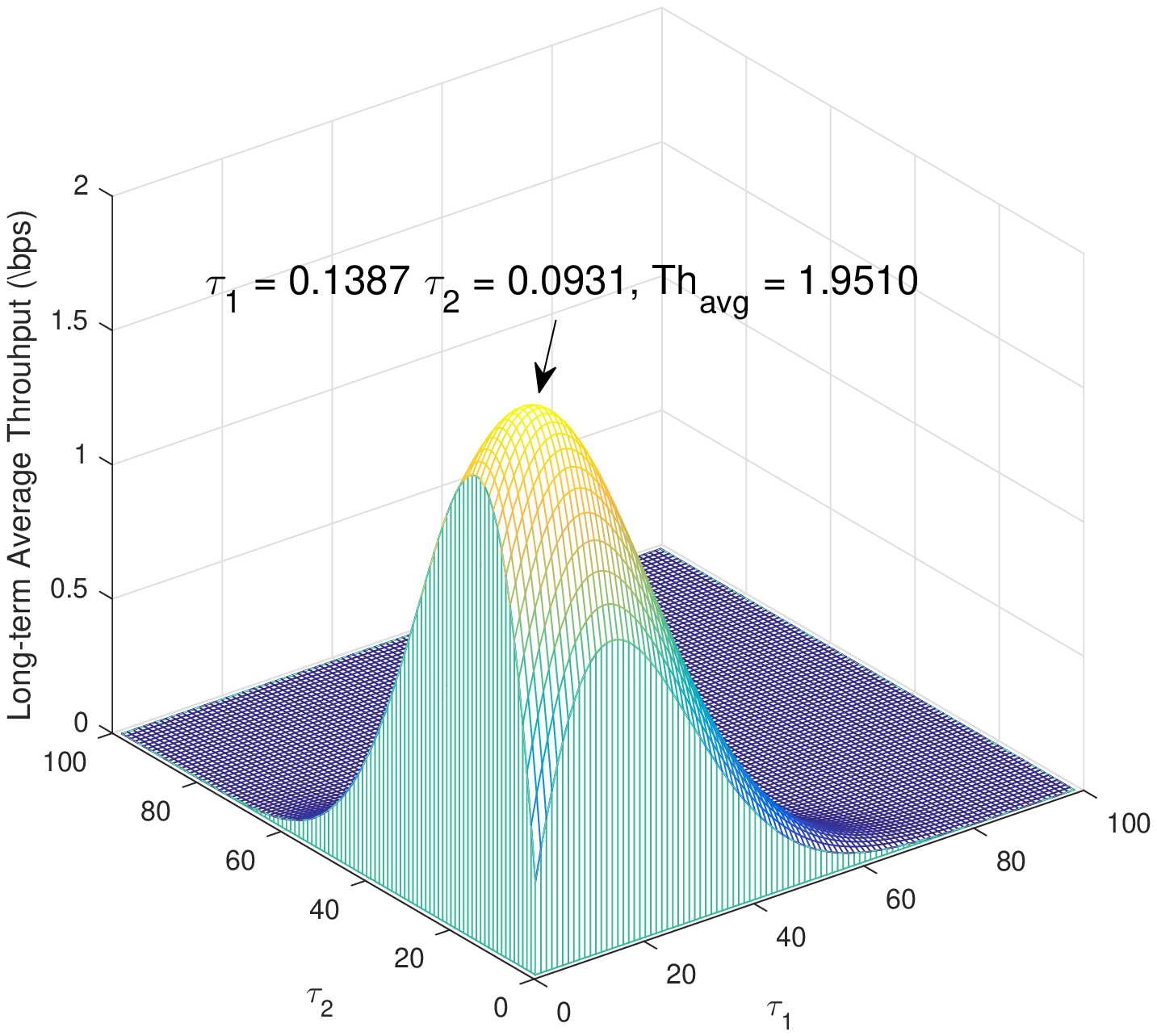}
\caption{Long-term average throughput under different $\tau_1$ and $\tau_2$.}
\label{average_th_1}
  \end{minipage}
  \hfill
  \begin{minipage}{0.27\textwidth}
    \centering
    \includegraphics[width=\linewidth]{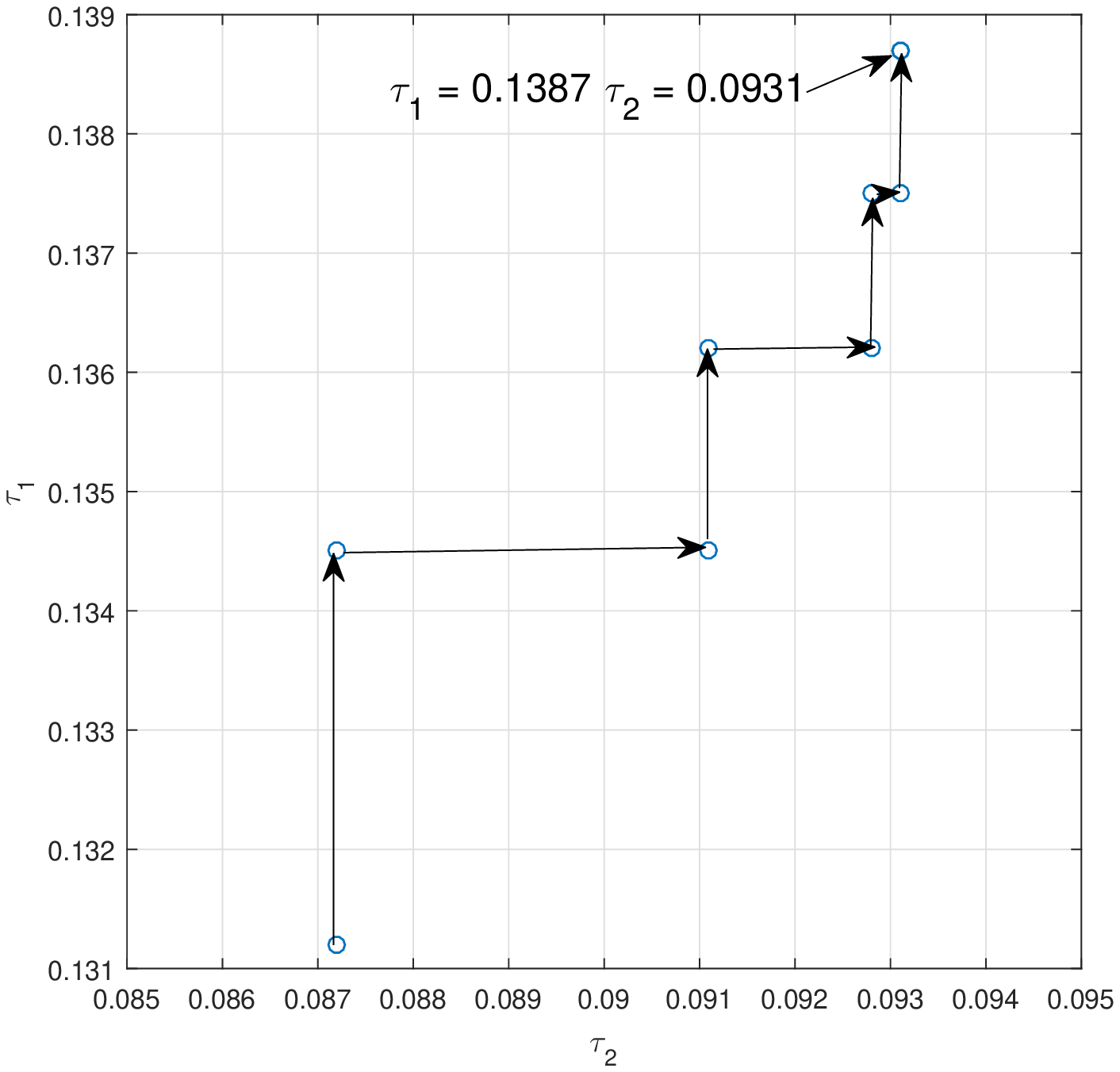}
    \caption{Optimal solution with Algorithm 1}
\label{fig:alg}
  \end{minipage}
  \hfill
 \begin{minipage}{0.27\textwidth}
    \centering
    \includegraphics[width=\linewidth]{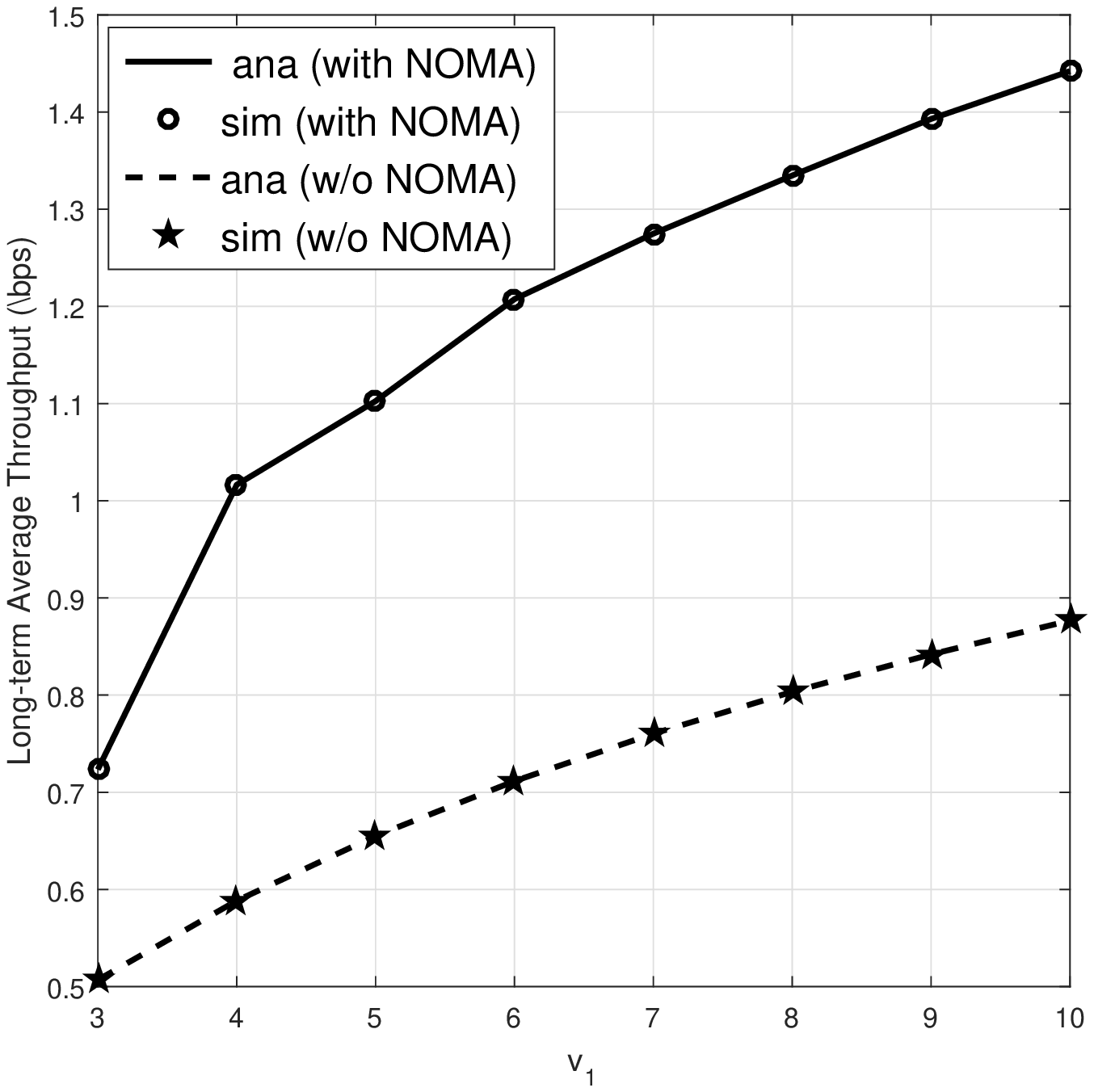}      
      \caption{Average throughput with and w/o NOMA}
\label{fig:compare_with_class_th}
  \end{minipage}
  \hfill
\end{figure*}

The feasible region presented in Lemma 1 in shown in Fig.~\ref{Feasible_region}. The feasible region to ensure high power users transmission is illustrated in solid line while the feasible region to ensure low power users transmission is in dotted line. When there are $n_1=2$ high power users and $n_2=5$ low power users, both high power users and low power users' data can be decoded, and thus $R_1>0$ for high power user and $R_2>0$ for low power user. If $5 < n_2 \leq 10 $, high power users data still can be decoded but the data rates decrease; low power users data cannot be decoded due to severe interference from concurrent transmissions and the SINR requirement cannot be satisfied. 

The long term average throughput under different number of users is shown in Fig.~\ref{relationship_m_th}. It can be seen that the throughput first increases with $m$ as more users achieves a higher multiplexing gain; and then decreases when more users joins the network, as more users increases the contentions that degrade the channel utilization efficiency. The peak point shifted to right for a smaller $\tau_1$ and $\tau_2$, when users has a smaller transmission probability. The simulations validate the analysis.  

Fig.~\ref{p_s_t_2} and Fig.~\ref{average_th_1} plot the successful transmission probability and throughput under different  $\tau_1$ and $\tau_2$ using exhaustive search. It can be seen that the optimal settings of $\tau_1$ and $\tau_2$ to achieve the maximum transmission probability and the maximum throughput are different. \eqref{eq:average_th} jointly considers the successful transmission probability in \eqref{p_success} and the achieved data rates of NOMA transmissions. It is also observed that  $\tau_1 > \tau_2$ when the maximum throughput is achieved, as it is more desirable for users to choose high power mode to achieve a greater throughput. 
The same solution of optimal $\tau_1$ and $\tau_2$ can be found in 10 iterations using Algorithm~\ref{alg}. 

The throughput performance of slotted ALOHA with and w/o NOMA is compared in Fig.~\ref{fig:compare_with_class_th}, under the optimal setting of $\tau_1$ and $\tau_2$. For conventional $P$-persistent ALOHA w/o NOMA, the successful transmission probability is
$   \mathbb{P}_{success}(m,p) =  p(1-p)^{m-1},$
and the long-term average throughput is $Th_{avg}(m,p) = \log_2(1+v_1)p(1-p)^{m-1}$. The maximum throughput is achieved when $p=\frac{1}{m}$. It can be seen that the throughput of ALOHA with NOMA significantly outperforms that of conventional ALOHA w/o NOMA. 

\vspace{-8pt}
\section{Conclusion}\label{conclusion} \vspace{-3pt}
In this paper, we have analyzed the performance of NOMA in a random access network using p-persistent slotted ALOHA. To ensure successful NOMA transmissions, the feasible region of the maximum number of users using high and low power has been derived. To achieve the maximal long-term average throughput, an algorithm is further proposed to find the optimal probabilities using high and low power for data transmissions. The impacts of the transmission powers on long-term average system throughput has been studied as well.

In this work, we optimize the transmission parameters to maximize the system performance, given that the wireless network environment is known by each individual user. In our future work, we will apply learning techniques to allow users to distributively learn the unknown wireless environment to make the optimal decision to maximize the throughput. 

\vspace{-6pt}
\bibliographystyle{IEEEtran}
\bibliography{IEEEfull,geo}
\vspace{-15pt}
\end{document}